\newcolumntype{d}[1]{D..{#1}}
\newcommand{\multiline}[1]{%
  \begin{tabularx}{\dimexpr\linewidth-\ALG@thistlm}[t]{@{}X@{}}
    #1
  \end{tabularx}
}
\algrenewcommand\algorithmicforall{\textbf{foreach}}
\algrenewcommand\algorithmicindent{.8em}
\newcommand{\doublewidetilde}[1]{{%
		\mathpalette\double@widetilde{#1}}}
\newcommand{\double@widetilde}[2]{%
		\sbox\z@{$\m@th#1\widetilde{#2}$}%
		\ht\z@=.5\ht\z@
		\widetilde{\box\z@}}
\newtheorem{lemma}{Lemma}
\newtheorem{corollary}{Corollary}
\def\di{\displaystyle}
\newcommand{\tron}[1]{\left(\di #1 \right)} 
\newcommand{\abs}[1]{\left|#1\right|}
\newcommand{\chuan}[1]{\left\lVert #1 \right\rVert}
\newcommand{\vuong}[1]{\left[ #1 \right]} 
\newcommand{\nhon}[1]{\left\{ #1 \right\}} 
\DeclareMathOperator{\Tr}{tr}
\newcommand{\oke}[1]{{{#1}}}
\begin{document}
\title{\huge Improved Differential Evolution for Enhancing the Aggregated Channel Estimation of RIS-Aided Cell-Free Massive MIMO}

\author{Trinh Van Chien, Nguyen Hoang Viet,  
Symeon Chatzinotas, \textit{IEEE Fellow}, and  Lajos Hanzo, \textit{IEEE Life Fellow}
  \vspace{-0.8cm}

\thanks{This research is funded by the Vietnam Ministry of Education and Training under project number B2025-BKA04 for Trinh Van Chien. L. Hanzo would like to acknowledge the financial support of the Engineering and Physical Sciences Research Council (EPSRC) projects under grant EP/Y026721/1, EP/W032635/1 and EP/X04047X/1 as well as of the European Research Council's Advanced Fellow Grant QuantCom (Grant No. 789028).}

\thanks{T. V. Chien and N. H. Viet are with the School of Information and Communications Technology, Hanoi University of Science and Technology, Hanoi 100000, Vietnam (e-mail: chientv@soict.hust.edu.vn and nguyenhoangviet.hsgs@gmail.com). S.
Chatzinotas is with the Interdisciplinary Centre for Security, Reliability and
Trust (SnT), University of Luxembourg, L-1855 Luxembourg, Luxembourg
(email: symeon.chatzinotas@uni.lu). L. Hanzo is with the Department of Electronics and Computer
Science, University of Southampton, Southampton SO17 1BJ, U.K. (email: lh@ecs.soton.ac.uk)}

}



\maketitle

\begin{abstract}
Cell-Free Massive multiple-input multiple-output (MIMO) systems are investigated with the support of a reconfigurable intelligent surface (RIS). The RIS phase shifts are designed for improved channel estimation in the presence of spatial correlation. Specifically, we formulate the channel estimate and estimation error expressions using linear minimum mean square error (LMMSE) estimation for the aggregated channels. An optimization problem is then formulated to minimize the average normalized mean square error (NMSE) subject to practical phase shift constraints. To circumvent the problem of inherent nonconvexity, we then conceive an enhanced version of the differential evolution algorithm that is capable of avoiding local minima by introducing an augmentation operator applied to some high-performing {Diffential Evolution (DE)} individuals. Numerical results indicate that our proposed algorithm can significantly improve the channel estimation quality of the state-of-the-art benchmarks. 
\end{abstract}

\begin{IEEEkeywords}
 Cell-free massive MIMO, reconfigurable intelligent surface, differential evolution
\end{IEEEkeywords}

\vspace{-0.5cm}
\section{Introduction}
Cell-Free Massive MIMO systems constitute a promising technology for next-generation (NG) wireless networks. In these system architectures, a large number of access points (APs) equipped with multiple antennas are spread across the coverage area to serve the users. The spatial diversity gain of Cell-Free Massive MIMO technology results in significant performance enhancements \cite{ngo2017cell}. Despite this appeal, Cell-Free Massive MIMO systems still encounter challenges such as pilot contamination and the support of remote users. The detector of Cell-Free Massive MIMO systems relies on instantaneous channel state information (CSI), which is obtained by the access points (APs) from the uplink pilot training. With many users admitted to the network, having \oke{mutually orthogonal pilots} for each user is impractical, partly because the bandwidth available limits the number of orthogonal pilot sequences. However even if all pilots were orthogonal, the orthogonality may be eroded by the limited coherence interval. Therefore, several users may share the same pilot signals that introduce interference \cite{jose2011pilot}.

Integrating RISs into Cell-Free Massive MIMO systems can tackle these limitations by dynamically controlling the radio signals through reflection, refraction or absorption. Thanks to the reconfigurable phase shift elements, RIS beneficially ameliorates the incoming waves to obtain a constructive signal combination at the receiver \cite{an2020optimal}. RIS-aided Cell-Free Massive MIMO systems offer various advantages for NG networks by enhancing signal strength, reducing mutual interference, and extending system coverage, even in challenging propagation environments \cite{shi2024ris}. {Acquiring the perfect CSI in RIS-aided networks is nontrivial due to the passive nature of elements \cite{xu2023reconfiguring}. The pilot overhead required for channel estimation may become excessive for the support of the RIS if the cascaded two-hop channels are estimated separately \cite{xu2022channel}. Thus, the aggregated channels involved in both links are jointly estimated without extra pilot overhead  \cite{van2021reconfigurable, li2023low, an2022joint}. Accurate phase shift designs are paramount and require sophisticated techniques capable of operating in the face of uncertainty.} {Previous research in \cite{yan2023passive,jiang2023two} and the reference therein have demonstrated the benefits of deterministic optimization based on the first- and second-order derivatives of the objective function and constraints. We emphasize that it is nontrivial to compute these derivatives and update the solution along with imperfect channel state information and in the presence of spatial correlation.}  

{As a sub-field of artificial intelligence and optimization, evolutionary algorithms (EAs) have successfully been harnessed for solving challenging nonconvex problems, regardless of the nature of constraints and variables. For example, EAs have been successfully harnessed for maximizing the sum rate \cite{dai2023two} or for minimizing the number of RISs \cite{huang2022placement}. In the family of EAs, Differential Evolution (DE) excels in terms of providing an effective mechanism of tackling continuous-valued optimization problems. For example, if in our system there are more users than the number of unique orthogonal pilot sequences, because their length is limited in a given bandwidth, then pilot contamination is encountered. Thanks to the raw power of DE, this phenomenon can be mitigated by appropriately adjusting its evolutionary operators.} However, the conventional DE still has some issues, since its performance hinges on the specific choice of its hyperparameters. Since these parameters do not readily lend themselves to nimble adaptation, the fixed values may not be effective in the face of variations in the input factors. The standard DE often faces the challenge of getting stuck at a local optimum owing to its premature convergence, resulting in suboptimality. 

Against this backdrop, we study the aggregated channel estimation of  RIS-aided Cell-Free Massive MIMO systems in the presence of spatial correlation. The main contributions are concisely summarized as follows: 
\begin{itemize}
    \item We derive the closed-form expression of the channel estimate and estimation error by using the LMMSE estimator in the presence of spatial correlation at both the transmitter and receiver. Furthermore, we introduce the \oke{LMMSE} of the channel estimate for each user as a criterion to assess the channel estimation quality;
    \item We formulate the optimization problem of minimizing the average NMSE of the overall network subject to specific phase shift constraints. Because the problem is NP-hard, we propose an improved DE-based algorithm for obtaining a high-quality solution in polynomial time. Our proposed algorithm is capable of escaping local minima, hence preventing premature convergence; 
    \item Numerical results demonstrate that the proposed approach enhances the channel estimation quality by up to \oke{33\%} compared to the canonical DE.
\end{itemize}

\textit{Notation}: Matrices and vectors are denoted by bold letters. The notation $\text{diag}(\mathbf{x})$ is the diagonal matrix with the elements of vector $\mathbf{x}$ on the main diagonal. The symbols $\Tr\tron{\cdot}$ and $\chuan{\cdot}$ are trace and norm operators, respectively. Meanwhile, the superscript $\tron{\cdot}^H$ denote the Hermitian transpose. $\mathcal{CN}\tron{\cdot, \cdot}$ and $\mathcal{U}\tron{\vuong{a,b}}$ denote the circularly symmetric Gaussian distribution and the uniform distribution in $\vuong{a,b}$. The expectation of a random variable is $\mathbb{E}\nhon{\cdot}$.  Finally, $\mathbf{I}_N$ is the $n$-dimensional identity matrix.

\vspace{-0.2cm}
\section{System Model, Pilot Training, and Aggregated Channel Estimation}
This section presents the system considered associated with an arbitrary pilot reuse factor. We then formulate an optimization problem for improving the quality of channel estimation. 
\vspace{-0.5cm}
\subsection{Channel Model}
Let us consider a system consisting of $L$ APs each equipped with $M$ antennas serving $K$ single-antenna users. We assume that all the APs and users are randomly distributed in the coverage area riddled with obstacles. Thus, the network is assisted by an RIS that has $N$ controllable reflecting elements. 
The phase shift matrix of the RIS is $\mathbf{\Phi} = \mathrm{diag} \{e^{i \theta_{1}}, \ldots , e^{i\theta_{N}} \} \in \mathbb{C}^{N\times N}$, where $\theta_{k} \in \vuong{-\pi,\pi},\forall k = \overline{1,N}$  is the phase shift applied by the $k$-th element of the RIS. The direct channel between AP~$m$ and user~$k$, denoted by $\mathbf{g}_{mk}\in \mathbb{C}^M$, follows the Rayleigh fading distribution, i.e., we have $\mathbf{g}_{mk} \sim \mathcal{CN}(0, \mathbf{G}_{mk})$, where $\mathbf{G}_{mk} \in \mathbb{C}^{M \times M}$ is the covariance matrix that describes the corresponding spatial correlation and the large-scale fading.\footnote{{Cell-free networks are often deployed in, for example, urban areas associated with many scatterers. Hence, the Rayleigh fading distribution is well suited for such rich scattering propagation environments.}} Thanks to the support of the RIS, the channel between APs and the RIS, as well as the RIS and users, is modelled by the Rician distribution. {We
denote the channel between AP $m$ and the RIS by $\mathbf{H}_m \in \mathbb{C}^{M \times N}$, which can be formulated as \cite{shiu2000fading} $\mathbf{H}_m = \overline{\mathbf{H}}_m + \mathbf{R}_{m,\mathrm{AP}}^{1/2}\mathbf{X}_m\mathbf{R}_{m,\mathrm{RIS}}^{1/2}$,
where $\overline{\mathbf{H}}_m$ denotes the LoS channel. The elements of $\mathbf{X}_m$ are  identically and independently distributed (iid) as $\mathcal{CN}\tron{0,1}$ with $\mathbf{R}_{m,\mathrm{AP}} \in \mathbb{C}^{M\times M}$ and $\mathbf{R}_{m,\mathrm{RIS}} \in \mathbb{C}^{N\times N}$ including the spatial correlation and large-scale fading effects.} $\mathbf{z}_{k} \in \mathbb{C}^N$ stands for the channel between the RIS and user~$k$, given as $\mathbf{z}_{k} \sim \mathcal{CN}(\overline{\mathbf{z}}_{k}, \mathbf{R}_{k})$, where $\overline{\mathbf{z}}_{k}$ involves the LoS components and $\mathbf{R}_k$ denotes the spatial correlation and large-scale fading. 
\vspace{-0.2cm}
\subsection{Pilot Training with an Arbitrary Pilot Reuse Factor}
{We assume that the pilot signals received in the uplink from the users are formed as a unit vector in the $\tau_p$-dimensions complex field. Let
$\pmb{\phi}_{k} \in \mathbb{C}^{\tau_p} $ associated with $\chuan{\pmb{\phi_{k}}} =1 $ denote the pilot signal allocated to user $k$. We denote the set of indices of the users that share the same pilot as user $k$ by $\mathcal{P}_{k}$. The pilot signals are mutually orthogonal, in which $\pmb{\phi}_{k'}^H \pmb{\phi}_{k} = 
    1$ if $k' \in \mathcal{P}_{k}$. Otherwise, $\pmb{\phi}_{k'}^H \pmb{\phi}_{k} = 
    0$.  If $\tau_p < K$, the pilot contamination appears as coherent interference.}
 The aggregated channel between user~$k$ and AP~$m$ is defined as $\mathbf{u}_{mk} = \mathbf{g}_{mk} + \mathbf{H}_m\mathbf{\Phi} \mathbf{z}_{k}$, which is encapsulated in the received training signal as
\begin{equation}
    \mathbf{Y}_{pm} = \sqrt{p\tau_p} \sum\nolimits_{k=1}^K \mathbf{u}_{mk}\pmb{\phi}_{k}^H + \mathbf{W}_{pm},
\end{equation}
where $p$ is the normalized signal-to-noise ratio (SNR) of each pilot symbol, and $\mathbf{W}_{pm}$ is the additive white Gaussian noise. Upon projecting $\mathbf{Y}_{pm}$ onto $\pmb{\phi}_{k}^H$, one can obtain
\begin{equation}
    \mathbf{y}_{pmk} = \sqrt{p\tau_p}\sum\nolimits_{k' \in \mathcal{P}_{k}}\mathbf{u}_{mk'} + \mathbf{w}_{pmk},  \label{receiveSignal}
\end{equation}
where $\mathbf{w}_{pmk} \sim \mathcal{CN}(0,\mathbf{I}_M)$.
The aggregated channel between AP~$m$ and  user~$k$ impinging through the RIS is $\mathbf{u}_{mk} = \mathbf{g}_{mk} + \mathbf{H}_m\mathbf{\Phi} \mathbf{z}_{k}.$
Based on the signal observation in \eqref{receiveSignal} and applying the linear MMSE estimator of \cite{kay1993fundamentals}, the estimated version of the aggregated channel $\mathbf{u}_{mk}$ can be formulated as 
\begin{equation} 
    \hat{\mathbf{u}}_{mk} = \overline{\mathbf{u}}_{mk} + \mathbb{E}\{\widetilde{\mathbf{u}}_{mk}\widetilde{\mathbf{y}}_{pmk}^H\} \big( \mathbb{E}\{\widetilde{\mathbf{y}}_{pmk}\widetilde{\mathbf{y}}_{pmk}^H\} \big)^{-1} \widetilde{\mathbf{y}}_{pmk}, \label{estimatedchannel}
\end{equation}
where $\overline{\mathbf{u}}_{mk} = \mathbb{E}\{ \mathbf{u}_{mk} \}$, $\widetilde{\mathbf{u}}_{mk} = \mathbf{u}_{mk} - \overline{\mathbf{u}}_{mk}$, and  $\tilde{\mathbf{y}}_{pmk} = \mathbf{y}_{pmk} - \mathbb{E}\{ \mathbf{y}_{pmk} \}$. The closed-form expression of the channel estimate $\hat{\mathbf{u}}_{mk}$ is obtained by computing the expectations in \eqref{estimatedchannel}, as shown in Lemma~\ref{Lemma:ChannelEst}.
\begin{lemma} \label{Lemma:ChannelEst}
If the LoS components are not blocked completely and the LMMSE estimator is used, then the aggregated channel estimate $\hat{\mathbf{u}}_{mk}, \forall m,k,$  can be expressed in closed form as
\begin{equation} \label{eq:ClosedformEst}
\hat{\mathbf{u}}_{mk} = \overline{\mathbf{H}}_m\mathbf{\Phi} \overline{\mathbf{z}}_{k}  + \mathbf{\Gamma}_{mk}\mathbf{\Psi}_{mk}^{-1} \widetilde{\mathbf{y}}_{pmk},
\end{equation}
where 
$\mathbf{\Delta}_{mk} = \mathbf{G}_{mk} + \overline{\mathbf{H}}_m\mathbf{\Phi}\mathbf{R}_{k}\mathbf{\Phi}^H\overline{\mathbf{H}}_m^H + \Tr(\mathbf{R}_{m,\mathrm{RIS}}\mathbf{\Phi}\mathbf{R}_{k}\mathbf{\Phi}^H)\mathbf{R}_{m,\mathrm{AP}}, \mathbf{\Gamma}_{mk} =  \sqrt{p\tau_p} \left(\mathbf{\Delta}_{mk} + \Tr\left(\mathbf{R}_{m,\mathrm{RIS}}\mathbf{\Phi}\overline{\mathbf{z}}_{k}\left(\sum\nolimits_{k' \in \mathcal{P}_k} \overline{\mathbf{z}}_{k'}^H\right)\mathbf{\Phi}^H\right)\mathbf{R}_{m,\mathrm{AP}}\right), \mathbf{\Psi}_{mk} =p\tau_p\Tr(\mathbf{R}_{m,\mathrm{RIS}}\mathbf{\Phi}(\sum\nolimits_{k' \in \mathcal{P}_k} \overline{\mathbf{z}}_{k'})(\sum\nolimits_{k' \in \mathcal{P}_k} \overline{\mathbf{z}}_{k'}^H)\mathbf{\Phi}^H)\mathbf{R}_{m,\mathrm{AP}}+ p\tau_p\sum_{k' \in \mathcal{P}_k} \mathbf{\Delta}_{mk'} + \mathbf{I}_M$, $\mathbf{\widetilde{y}}_{pmk} = \sqrt{p\tau_p}\sum_{k' \in \mathcal{P}_k}\widetilde{\mathbf{u}}_{mk'} + \mathbf{w}_{pmk}$.
 In addition, the channel estimation error defined as $\mathbf{e}_{mk} = \mathbf{u}_{mk} - \hat{\mathbf{u}}_{mk}$ has zero mean and the covariance matrix equals to
\begin{equation}\label{errorvariance}
 \mathbb{E}\{\mathbf{e}_{mk}\mathbf{e}_{mk}^H\} = \mathbf{\Delta}_{mk} +  \alpha_{mk}\mathbf{R}_{m,\mathrm{AP}} - \mathbf{\Gamma}_{mk}\mathbf{\Psi}_{mk}^{-1}\mathbf{\Gamma}_{mk}^H,
\end{equation}
where $\alpha_{mk} =\Tr(\mathbf{R}_{m,\mathrm{RIS}}\mathbf{\Phi}\overline{\mathbf{z}}_{k}\overline{\mathbf{z}}_{k}^H\mathbf{\Phi}^H)$.
\end{lemma}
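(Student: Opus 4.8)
The plan is to evaluate in closed form the three second-order quantities that define the LMMSE estimator in \eqref{estimatedchannel} --- the mean $\overline{\mathbf{u}}_{mk}$, the cross-covariance $\mathbb{E}\{\widetilde{\mathbf{u}}_{mk}\widetilde{\mathbf{y}}_{pmk}^H\}$, and the observation covariance $\mathbb{E}\{\widetilde{\mathbf{y}}_{pmk}\widetilde{\mathbf{y}}_{pmk}^H\}$ --- and then to invoke the textbook LMMSE error-covariance identity. I would begin by splitting each Rician factor into its mean and fluctuation, writing $\mathbf{H}_m = \overline{\mathbf{H}}_m + \widetilde{\mathbf{H}}_m$ with $\widetilde{\mathbf{H}}_m = \mathbf{R}_{m,\mathrm{AP}}^{1/2}\mathbf{X}_m\mathbf{R}_{m,\mathrm{RIS}}^{1/2}$, and $\mathbf{z}_k = \overline{\mathbf{z}}_k + \widetilde{\mathbf{z}}_k$ with $\widetilde{\mathbf{z}}_k \sim \mathcal{CN}(0,\mathbf{R}_k)$. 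Taking the expectation of $\mathbf{u}_{mk} = \mathbf{g}_{mk} + \mathbf{H}_m\mathbf{\Phi}\mathbf{z}_k$, using $\mathbb{E}\{\mathbf{g}_{mk}\}=0$ and the independence of $\mathbf{H}_m$ and $\mathbf{z}_k$, yields $\overline{\mathbf{u}}_{mk} = \overline{\mathbf{H}}_m\mathbf{\Phi}\overline{\mathbf{z}}_k$, which is the first term of \eqref{eq:ClosedformEst}; substituting back gives the four-term expansion $\widetilde{\mathbf{u}}_{mk} = \mathbf{g}_{mk} + \overline{\mathbf{H}}_m\mathbf{\Phi}\widetilde{\mathbf{z}}_k + \widetilde{\mathbf{H}}_m\mathbf{\Phi}\overline{\mathbf{z}}_k + \widetilde{\mathbf{H}}_m\mathbf{\Phi}\widetilde{\mathbf{z}}_k$. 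Since $\mathbb{E}\{\widetilde{\mathbf{y}}_{pmk}\}=0$ forces $\mathbb{E}\{\hat{\mathbf{u}}_{mk}\} = \overline{\mathbf{u}}_{mk}$, the zero-mean property of the error $\mathbf{e}_{mk}$ follows at once.

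The computational engine is the second-moment identity $\mathbb{E}\{\mathbf{X}_m\mathbf{B}\mathbf{X}_m^H\} = \Tr(\mathbf{B})\mathbf{I}_M$, valid for any deterministic $\mathbf{B}$ because the entries of $\mathbf{X}_m$ are iid $\mathcal{CN}(0,1)$; I would establish this once and reuse it. Applying it to the self-covariance $\mathbb{E}\{\widetilde{\mathbf{u}}_{mk}\widetilde{\mathbf{u}}_{mk}^H\}$, every cross term vanishes by the independence and zero-mean property of $\mathbf{g}_{mk}$, $\widetilde{\mathbf{z}}_k$ and $\mathbf{X}_m$, leaving four diagonal contributions: $\mathbf{G}_{mk}$ from the direct link, $\overline{\mathbf{H}}_m\mathbf{\Phi}\mathbf{R}_k\mathbf{\Phi}^H\overline{\mathbf{H}}_m^H$ from the LoS-AP/NLoS-RIS coupling, the term $\alpha_{mk}\mathbf{R}_{m,\mathrm{AP}}$ from the NLoS-AP/LoS-RIS coupling (where the trace identity applied with $\mathbf{B} = \mathbf{R}_{m,\mathrm{RIS}}^{1/2}\mathbf{\Phi}\overline{\mathbf{z}}_k\overline{\mathbf{z}}_k^H\mathbf{\Phi}^H\mathbf{R}_{m,\mathrm{RIS}}^{1/2}$ produces exactly $\alpha_{mk}$), and $\Tr(\mathbf{R}_{m,\mathrm{RIS}}\mathbf{\Phi}\mathbf{R}_k\mathbf{\Phi}^H)\mathbf{R}_{m,\mathrm{AP}}$ from the doubly-NLoS term. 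Hence $\mathbb{E}\{\widetilde{\mathbf{u}}_{mk}\widetilde{\mathbf{u}}_{mk}^H\} = \mathbf{\Delta}_{mk} + \alpha_{mk}\mathbf{R}_{m,\mathrm{AP}}$, reproducing the first two terms of \eqref{errorvariance}.

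Next I would compute the inter-user cross-covariance $\mathbb{E}\{\widetilde{\mathbf{u}}_{mk}\widetilde{\mathbf{u}}_{mk'}^H\}$ for $k'\in\mathcal{P}_k$ with $k'\neq k$; here the only surviving contribution comes from the shared NLoS-AP randomness $\mathbf{X}_m$ acting on the two LoS-RIS vectors, giving $\Tr(\mathbf{R}_{m,\mathrm{RIS}}\mathbf{\Phi}\overline{\mathbf{z}}_k\overline{\mathbf{z}}_{k'}^H\mathbf{\Phi}^H)\mathbf{R}_{m,\mathrm{AP}}$. Assembling $\mathbf{\Gamma}_{mk} = \sqrt{p\tau_p}\sum_{k'\in\mathcal{P}_k}\mathbb{E}\{\widetilde{\mathbf{u}}_{mk}\widetilde{\mathbf{u}}_{mk'}^H\}$, the observation that $\alpha_{mk}$ is precisely the $k'=k$ case of the trace term lets the diagonal and off-diagonal contributions coalesce into $\Tr(\mathbf{R}_{m,\mathrm{RIS}}\mathbf{\Phi}\overline{\mathbf{z}}_k(\sum_{k'\in\mathcal{P}_k}\overline{\mathbf{z}}_{k'}^H)\mathbf{\Phi}^H)\mathbf{R}_{m,\mathrm{AP}}$, yielding the stated $\mathbf{\Gamma}_{mk}$; the same bookkeeping on the double sum $\mathbf{\Psi}_{mk} = p\tau_p\sum_{k',k''\in\mathcal{P}_k}\mathbb{E}\{\widetilde{\mathbf{u}}_{mk'}\widetilde{\mathbf{u}}_{mk''}^H\} + \mathbf{I}_M$ collapses the diagonal $\sum_{k'}\alpha_{mk'}$ and the off-diagonal traces into $\Tr(\mathbf{R}_{m,\mathrm{RIS}}\mathbf{\Phi}(\sum_{k'\in\mathcal{P}_k}\overline{\mathbf{z}}_{k'})(\sum_{k'\in\mathcal{P}_k}\overline{\mathbf{z}}_{k'}^H)\mathbf{\Phi}^H)\mathbf{R}_{m,\mathrm{AP}}$, leaving $p\tau_p\sum_{k'\in\mathcal{P}_k}\mathbf{\Delta}_{mk'}$ plus the identity. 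Substituting these into \eqref{estimatedchannel} gives \eqref{eq:ClosedformEst}, and the LMMSE identity $\mathbb{E}\{\mathbf{e}_{mk}\mathbf{e}_{mk}^H\} = \mathbb{E}\{\widetilde{\mathbf{u}}_{mk}\widetilde{\mathbf{u}}_{mk}^H\} - \mathbf{\Gamma}_{mk}\mathbf{\Psi}_{mk}^{-1}\mathbf{\Gamma}_{mk}^H$ delivers \eqref{errorvariance}. I expect the main obstacle to be the disciplined tracking of which cross terms survive: the coexistence of deterministic LoS parts and correlated NLoS parts in both hops generates many candidate couplings, and the crucial non-obvious step is recognizing that the NLoS-AP/LoS-RIS couplings --- encoded through the $\Tr(\mathbf{R}_{m,\mathrm{RIS}}\mathbf{\Phi}\,\cdot\,\mathbf{\Phi}^H)\mathbf{R}_{m,\mathrm{AP}}$ structure --- are exactly what lets the $\alpha$-type diagonal terms merge with the inter-user terms into the compact summations that define $\mathbf{\Gamma}_{mk}$ and $\mathbf{\Psi}_{mk}$.
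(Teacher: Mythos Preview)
Your proposal is correct and mirrors the paper's own argument almost exactly: the paper isolates the same trace identity $\mathbb{E}\{\widetilde{\mathbf{H}}_m\mathbf{D}\widetilde{\mathbf{H}}_m^H\}=\Tr(\mathbf{R}_{m,\mathrm{RIS}}\mathbf{D})\mathbf{R}_{m,\mathrm{AP}}$ as a preliminary corollary, packages the self- and cross-covariances $\mathbb{E}\{\widetilde{\mathbf{u}}_{mk}\widetilde{\mathbf{u}}_{mk'}^H\}$ into an auxiliary lemma, and then assembles $\mathbf{\Gamma}_{mk}$ and $\mathbf{\Psi}_{mk}$ by the same merging of the $\alpha$-type diagonal terms with the off-diagonal traces that you describe. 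The error-covariance step is dispatched in the paper by the same standard LMMSE identity you invoke.
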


\begin{proof} 
See Appendix B.
\end{proof}
{In contrast to the channel estimate in \eqref{estimatedchannel}, the closed-form expressions in \eqref{eq:ClosedformEst} and \eqref{errorvariance} explicitly unveil the influences of spatial correlation, pilot reuse pattern, and phase shift design on the channel estimation quality. We also stress that the aggregated channel estimate in \eqref{eq:ClosedformEst} does not require any additional pilot training overhead compared to that of the conventional Cell-Free Massive MIMO systems. Having highly accurate channel estimation is of paramount importance in boosting the achievable rate of the uplink and downlink data transmission, especially in the time division duplex mode.} 
\vspace{-0.2cm}
\subsection{Problem formulation}
We first denote the normalized mean square error (NMSE) of the estimated channel between user $k$ and AP~$m$ as
\begin{equation}\label{NMSEformula}
    \mathrm{NMSE}_{mk} =\dfrac{\mathbb{E}\nhon{\chuan{\mathbf{e}_{mk}}^2}}{\mathbb{E}\nhon{\chuan{\mathbf{u}_{mk}}^2}} \stackrel{(a)}{=}  \dfrac{\mathbb{E}\nhon{\chuan{\mathbf{e}_{mk}}^2}}{\mathbb{E}\nhon{\|\hat{\mathbf{u}}_{mk}\|^2} + \mathbb{E}\nhon{\chuan{\mathbf{e}_{mk}}^2}}, 
\end{equation}
where $(a)$ is obtained by exploiting the independence of the channel estimate and the resultant estimation error. The second equality of \eqref{NMSEformula} indicates that the NMSE is constrained to the interval $\vuong{0,1}$. {By exploiting Lemma~\ref{Lemma:ChannelEst}, one can formulate the closed-form expression of the NMSE as 
\begin{align} \label{NMSEformulav1}
&\mathrm{NMSE}_{mk} = 
\dfrac{\Tr\tron{\mathbf{\Delta}_{mk} +  \alpha_{mk}\mathbf{R}_{m,\mathrm{AP}} - \mathbf{\Gamma}_{mk}\mathbf{\Psi}_{mk}^{-1}\mathbf{\Gamma}_{mk}^H}}{\Tr\tron{\overline{\mathbf{H}}_m\mathbf{\Phi}\overline{\mathbf{z}}_k\overline{\mathbf{z}}_k^H\mathbf{\Phi}^H \overline{\mathbf{H}}_m^H + \mathbf{\Delta}_{mk} +  \alpha_{mk}\mathbf{R}_{m,\mathrm{AP}}}} =  1 - \notag\\
&\dfrac{\Tr\tron{\overline{\mathbf{H}}_m\mathbf{\Phi}\overline{\mathbf{z}}_k\overline{\mathbf{z}}_k^H\mathbf{\Phi}^H \overline{\mathbf{H}}_m^H + \mathbf{\Gamma}_{mk}\mathbf{\Psi}_{mk}^{-1}\mathbf{\Gamma}_{mk}^H}}{\Tr\tron{\overline{\mathbf{H}}_m\mathbf{\Phi}\overline{\mathbf{z}}_k\overline{\mathbf{z}}_k^H\mathbf{\Phi}^H \overline{\mathbf{H}}_m^H + \mathbf{\Delta}_{mk} +  \Tr\tron{\mathbf{R}_{m,\mathrm{RIS}}\mathbf{\Phi}\overline{\mathbf{z}}_{k}\overline{\mathbf{z}}_{k}^H\mathbf{\Phi}^H}\mathbf{R}_{m,\mathrm{AP}}}}.
\end{align} 
Observe from \eqref{NMSEformulav1} that the NMSE can be arbitrarily small if orthogonal pilot signals are harnessed and the pilot power $p$ is sufficiently high. In the case of pilot reuse, the NMSE will no longer tend to zero. We stress that the NMSE obtained in \eqref{NMSEformulav1} is a generic version of the previous work  where was only applied for correlated Rayleigh fading \cite{van2021reconfigurable} or without the presence of pilot contamination \cite{zhi2022two}.} Consequently, we propose to design the phase shifts by minimizing the average NMSE of the overall network as follows:
\begin{alignat}{2}\label{mainprob}
    &\underset{\{\theta_1,\theta_2,\ldots,\theta_N\}}{\mathrm{minimize}} && \,  \frac{1}{L   K}\sum\nolimits_{m=1}^L \sum\nolimits_{k=1}^K \text{NMSE}_{mk}\\
    &\text{subject to} && -\pi \le \theta_{n} \le \pi, \forall \, 1 \le n \le N. \notag
\end{alignat}
Problem~\eqref{mainprob} is non-convex and it is nontrivial to obtain the first and second derivatives of the objective function under spatial correlation. Hence, solving \eqref{mainprob} is challenging.
\section{Improved DE-based Solution}
\subsection{DE Relying on Augmentation Operator}
To tackle the complex optimization of the NMSE in \eqref{NMSEformulav1} and the non-convexity of problem \eqref{mainprob}, we conceive an improved version of DE by integrating it with an augmentation strategy, termed as Augmentation DE (ADE).
To elaborate, initially, a population of individuals is generated, which is then evolved through consecutive generations. The mutation and crossover operators are harnessed for generating new individuals in each generation. Then the fitness of each new offspring is evaluated and compared directly to that of its parent. The specific individuals having a lower average NMSE are then passed on to the next generation. When this happens, the parent individual will be stored in an archive set $\mathcal{A}$ for later use in order to maintain search diversity. The current solution of problem \eqref{mainprob} is the particular individual having the best fitness value, i.e. the lowest average NMSE. \oke{Again, to avoid getting stuck at a local optimum as the conventional DE, we harness an augmentation operator to improve the current solution. Additionally, unlike in the conventional DE, the hyperparameters in the mutation and crossover operators are dynamically adjusted. This method enables ADE to update the solution based on the search behavior after each iteration, resulting in improved solutions in subsequent generations.} A detailed description of Algorithm~\ref{alg:SDE} is provided below.
\begin{algorithm}
    \caption{Augmentation Differential Evolution (ADE)}
    \label{alg:SDE}
\hspace*{\algorithmicindent}\textbf{Input:} The channel $\mathbf{g}_{mk}, \mathbf{H}_m, \mathbf{z}_k$ with $k = \overline{1,K}, m = \overline{1,L}$; the maximum number of generations $G_{\text{MAX}}$; the augmentation search condition $\varepsilon$ and augmentation parameter $\beta$. \\
    \hspace*{\algorithmicindent}\textbf{Output:} The phase shift matrix $\mathbf{\Phi}$. 
    \begin{algorithmic}[1]
        \State Set the generation index $G \gets 1$. Randomly initialize a population $\mathcal{P}^{(1)}$ with $I$ individuals and set archive set $\mathcal{A} \gets \mathcal{P}^{(1)}$.
        \While{$G < G_{\text{MAX}}$} 
            \State Initialize the next generation $\mathcal{Q}^{(G)} \gets \varnothing$. 
            \For{$i = 1 : I$}
                \State \multiline{%
                Generate $\mathsf{F}^{(iG)}$ and $\mathsf{CR}^{(iG)}$
                as the SHADE method \cite{tanabe2013success}.
                Create mutant vector $\mathbf{u}^{(iG)}$ using DE/$p$best/1 strategy in \eqref{mutop}.
                Create trial vector $\pmb{\omega}^{(iG)}$ using \eqref{crosop}.
                }
                \If {$f(\pmb{\omega}^{(iG)}) \le f(\mathbf{x}^{(iG)})$}
                    \State $\mathcal{Q}^{(G)} \gets \mathcal{Q}^{(G)} \cup \pmb{\omega}^{(iG)}$, $\mathcal{A}_j \gets \mathbf{x}^{(iG)}$
                \Else 
                    \State $\mathcal{Q}^{(G)} \gets \mathcal{Q}^{(G)} \cup \mathbf{x}^{(iG)}$
                \EndIf 
                \If {$f^{(\text{best}G)} - f^{(\text{best}Q)} < \varepsilon$}
                    \For {$j = 1 : \lambda$}
                        \State Create augmented vector $\widetilde{\pmb{\omega}}^{(jG)}$ using \eqref{shiftop}.
                        \If {$f(\widetilde{\pmb{\omega}}^{(jG)})< f(\pmb{\omega}^{(jG)})$ }
                            \State $\mathcal{Q}^{(G)}_j \gets \widetilde{\pmb{\omega}}^{(jG)}$
                        \ElsIf {$f(\widetilde{\pmb{\omega}}^{(jG)})< f\tron{\mathcal{A}_j}$}
                            \State $\mathcal{A}_j \gets \widetilde{\pmb{\omega}}^{(jG)}$
                        \EndIf
                    \EndFor
                \EndIf
            \EndFor
            \State $\mathcal{P}^{(G)} \gets \mathcal{Q}^{(G)}$, $G \gets G + 1$.
        \EndWhile
    \end{algorithmic}
\end{algorithm}

\textit{1) Solution representation and population initialization:} The population $\mathcal{P}^{(1)}$ contains $I$ individuals, which are randomly generated. The $i$-th individual of the $G$-th generation is denoted by $\mathbf{x}^{(iG)} \in \mathbb{R}^N$, whose elements range from $-\pi$ to $\pi$. 

\textit{2) Mutation:} In the $G$-th generation, a mutant vector $\mathbf{u}^{(iG)} \in \mathbb{R}^N$ is created corresponding to its parent $\mathbf{x}^{(iG)}$. In this paper, we employ the popular DE/$p$best/1 strategy to create a mutant vector, which is formulated as
\begin{equation}\label{mutop}
    \mathbf{u}^{(iG)} = \mathbf{x}^{(p\text{best}G)} + \mathsf{F}^{(iG)}(\mathbf{x}^{(r_1G)} - \mathbf{x}^{(r_2G)}),
\end{equation}
where $\mathbf{x}^{(p\text{best}G)}$ is chosen from the $\lfloor I \times p \rfloor$ individuals with the best fitness values in the current population $\mathcal{P}^{(G)}$, whereas $\mathbf{x}^{(r_1G)}$ and $\mathbf{x}^{(r_2G)}$ is selected randomly from $\mathcal{P}^{(G)}$ and $\mathcal{P}^{(G)} \cup \mathcal{A}$, respectively. The term $\mathbf{x}^{(p\text{best}G)}$ is employed for ensuring that the search behaviour is not be dominated by any particular individual. Meanwhile,  the scale factor $\mathsf{F}^{(iG)}$ of the $i$-th individual will be adapted according to the specific nature of the search behaviour instead of using a fixed value throughout all generations, as proposed in \cite{tanabe2013success}. Nonetheless, the update in \eqref{mutop} may result in having some elements of $\mathbf{u}^{(iG)}$ to fall outside the feasible domain. To tackle this risk, we adjust the mutant vector formulated mathematically as 
\begin{equation}
    [\mathbf{u}^{(iG)}]_t = 
    \begin{cases}
        (1 + [\mathbf{x}^{(iG)}]_t)/2, & \text{if } [\mathbf{u}^{(iG)}]_t > 1, \\
        (-1 + [\mathbf{x}^{(iG)}]_t)/2, & \text{if } [\mathbf{u}^{(iG)}]_t <- 1, 
    \end{cases}
\end{equation} 
where $[\mathbf{a}]_t$ is the $t$-th element of  vector $\mathbf{a}$. 

\textit{3) Crossover:} After the mutation, $\mathbf{u}^{(iG)}$ is combined with its parents to generate a trial vector $\pmb{\omega}^{(iG)} \in \mathbb{R}^N$, which is
\begin{equation}\label{crosop}
    [\pmb{\omega}^{(iG)}]_t = \begin{cases}
        [\mathbf{u}^{(iG)}]_t, &\text{if } \mathcal{U}\vuong{0,1} \le \mathsf{CR}^{(iG)} \mbox{ or } t = t_{\mathrm{rand}}, \\
        [\mathbf{x}^{(iG)}]_t, &\text{otherwise,}
    \end{cases}
\end{equation}
where $\mathcal{U}\vuong{0,1}$ represents a
uniform random number range between $0$ and $1$,  $\mathsf{CR}^{(iG)}$ is the crossover rate, and $t_{rand}$ is chosen randomly between $1$ and $N$ to ensure that $\pmb{\omega}^{(iG)}$ has at least one element from the mutant vector $\mathbf{u}^{(iG)}$. 

\textit{4) Selection:} 
The trial vector $\pmb{\omega}^{(iG)}$ will replace $\mathbf{x}^{(iG)}$ in the next generation, if it has better fitness value. Meanwhile, $\mathbf{x}^{(iG)}$ will replace the $i$-th member in the archive set $\mathcal{A}$ for ensuring that $\mathcal{P}^{(G)}$ and $\mathcal{A}$ have the same cardinality.

\textit{5) Dispensing with Augmentation:} The conventional DE algorithm has a major drawback. Explicitly, it tends to promptly evolve all individuals in the population towards a local minimum without searching for a better solution. To circumvent the problem of premature convergence,  we introduce a local search operator, where all phase shift elements of each selected individual will be augmented by a random parameter to ensure that some individuals escape from the local minimum. In particular, we modify the first $\lambda$ ($\lambda \in \vuong{1,I}$) members of the population if $\abs{f^{(\text{best}Q)} - f^{(\text{best}G)}} < \varepsilon$, where $f^{(\text{best}G)}$ is the best fitness value obtained in the current generation, $f^{(\text{best}Q)}$ is the best fitness value after selection and $\varepsilon$ is a fixed tolerance threshold. The augmented vector $\widetilde{\pmb{\omega}}^{(jG)}$ ($j \in \vuong{1,\lambda}$), which has a minimum value of $\pmb{\omega}^{(jG)}$ is formulated as 
\begin{equation}\label{shiftop}
    \widetilde{\pmb{\omega}}^{(jG)} = \pmb{\omega}^{(jG)} + \beta  \mathbf{1}_N,
\end{equation}
where $\mathbf{1}_N$ is a vector having $N$ elements equal to $1$ and $\beta \sim \mathcal{N}(0,\sigma^2)$ has a low variance $\sigma^2$. Note that $\widetilde{\pmb{\omega}}^{(jG)}$ will replace $\pmb{\omega}^{(jG)}$ in the next generation if $\oke{f(\widetilde{\pmb{\omega}}^{(jG)}) < f(\pmb{\omega}^{(jG)})}$. Otherwise, the augmented vector $\widetilde{\pmb{\omega}}^{(jG)}$ will replace the $j$-th individual in the archive set $\mathcal{A}$. 

\textit{6) Parameter adaption:} The performance of Algorithm~\ref{alg:SDE} significantly depends on the choice of the control parameter $\mathsf{F}^{(iG)}$ in \eqref{mutop} and $\mathsf{CR}^{(iG)}$ in \eqref{crosop} because they directly affect both the mutation and crossover mechanisms. It is adjusted throughout the generations relying on the  so-called success-history-based parameter adaptation (SHADE) of \cite{tanabe2013success}.\footnote{{Thanks to intelligent computation, Algorithm~1 can potentially be extended to scenarios of multiple-RIS-aided systems by including the RIS indices as optimization variables and extending the evolutionary dimensions. Due to the limited space in this compactness, this promising research direction is left for future research.}}

\subsection{Computational Complexity}
As a function of the phase shift matrix $\mathbf{\Phi}$, the channel state information $\mathbf{u}_{mk}$,  $\mathbf{\hat{u}}_{mk}$, $\mathbf{e}_{mk}$, and the average NMSE in \eqref{NMSEformulav1} are updated accordingly. Specifically,  the update of the covariance matrix of aggregated channel $\mathbf{u}_{mk}$ has a computational complexity order of $\mathcal{O}(MN^2 + M^2N + N^3)$. By contrast, the matrices $\mathbf{\Gamma}_{mk}$ and $\mathbf{\Psi}_{mk}$ impose a complexity order of $\mathcal{O}(MN^2 + M^2N + N^3 + K)$. Hence, the  complexity of evaluating the objective function \eqref{mainprob} is $\mathcal{O}\vuong{LK(MN^2 +M^2N + N^3 +K)}$. {Regarding the phase shift optimization, the initialization imposes the computational complexity of $\mathcal{O}(IN)$. In each generation, sorting the population for use in (9) has the cost of $\mathcal{O}\vuong{I\log(I)}$, while both the mutation and crossover have the complexity of $\mathcal{O}(IN)$. The selection, augmentation search, and parameter adaptation have the cost of $\mathcal{O}(I), \mathcal{O}(\lambda I),$ and $\mathcal{O}(I)$, respectively. Thus, the computational complexity of Algorithm 1 is $\mathcal{O}[GI\log(I) + GIN + \lambda GI]$ with $G$ being the number of generations. Basically, at each iteration, ADE shares the same core stages as the Genetic Algorithm (GA) and DE, which includes mutation, crossover, selection, and population sorting. The significant distinction between the proposed algorithm and the two canonical benchmarks is that ADE activates the Augmentation Operator after the selection and updates the hyper-parameters according to the search behavior. These enhancements lead to the difference in the complexity of three algorithms considered, as summarized in Table~\ref{table:complexity}.}
\begin{table}[t]
\caption{The complexity of ADE, DE, and GA.}
\label{table:complexity}
\centering
\begin{tabular}{|c|c|} \hline
\multicolumn{1}{|l|}{Benchmark} & \multicolumn{1}{c|}{Complexity} \\ \hline
ADE                   & $\mathcal{O}[GI\log(I) + GIN + \lambda GI]$                              \\ \hline
 DE       & $\mathcal{O}[GI\log(I) + GIN]$                              \\ \hline
 GA       & $\mathcal{O}[GI\log(I) + GIN]$ \\ \hline                  
\end{tabular}
\end{table}

\section{Numerical Results}

\begin{figure*}
\begin{subfigure}[t]{0.24\textwidth}
    \includegraphics[scale=0.24]{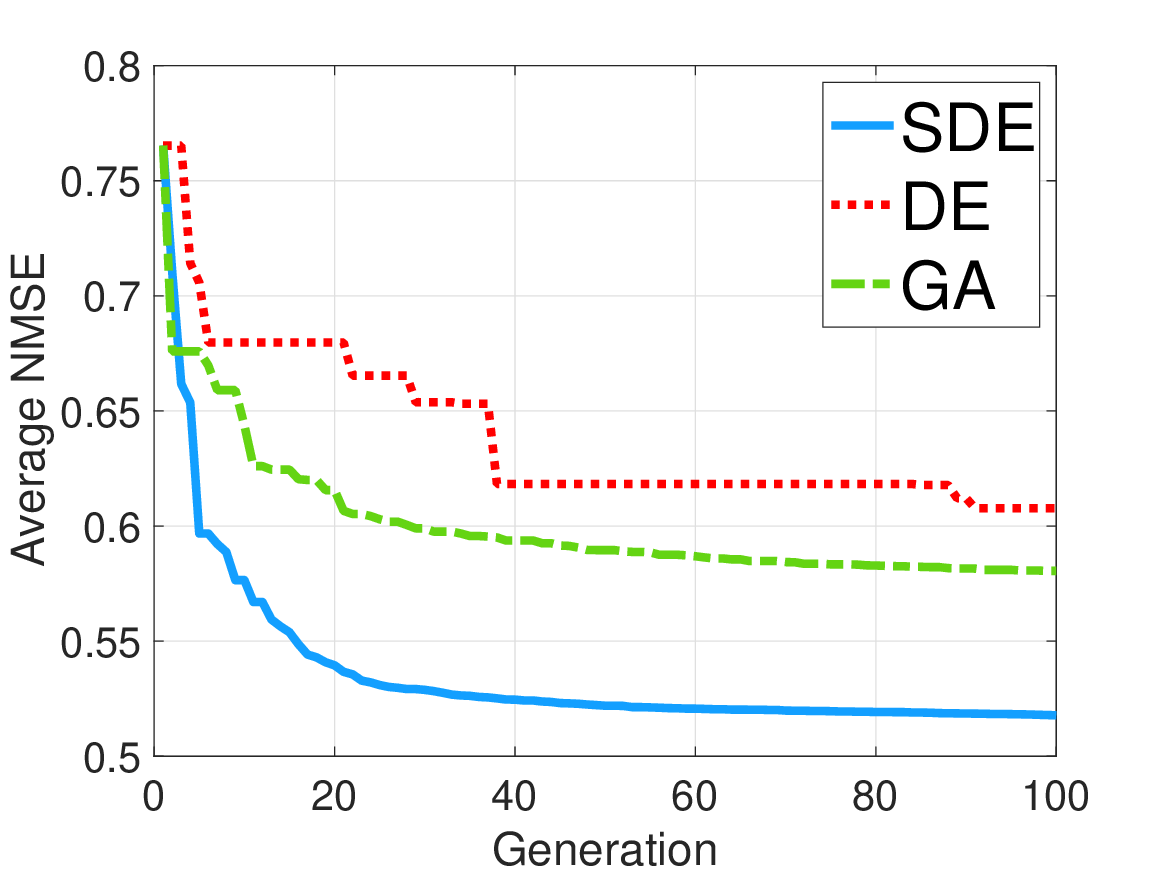}
    \caption{$\tau_p = 1,N = 100$}\label{fig:100ris-1pilot}
\end{subfigure}\hfill
\begin{subfigure}[t]{0.24\textwidth}
    \includegraphics[scale=0.24]{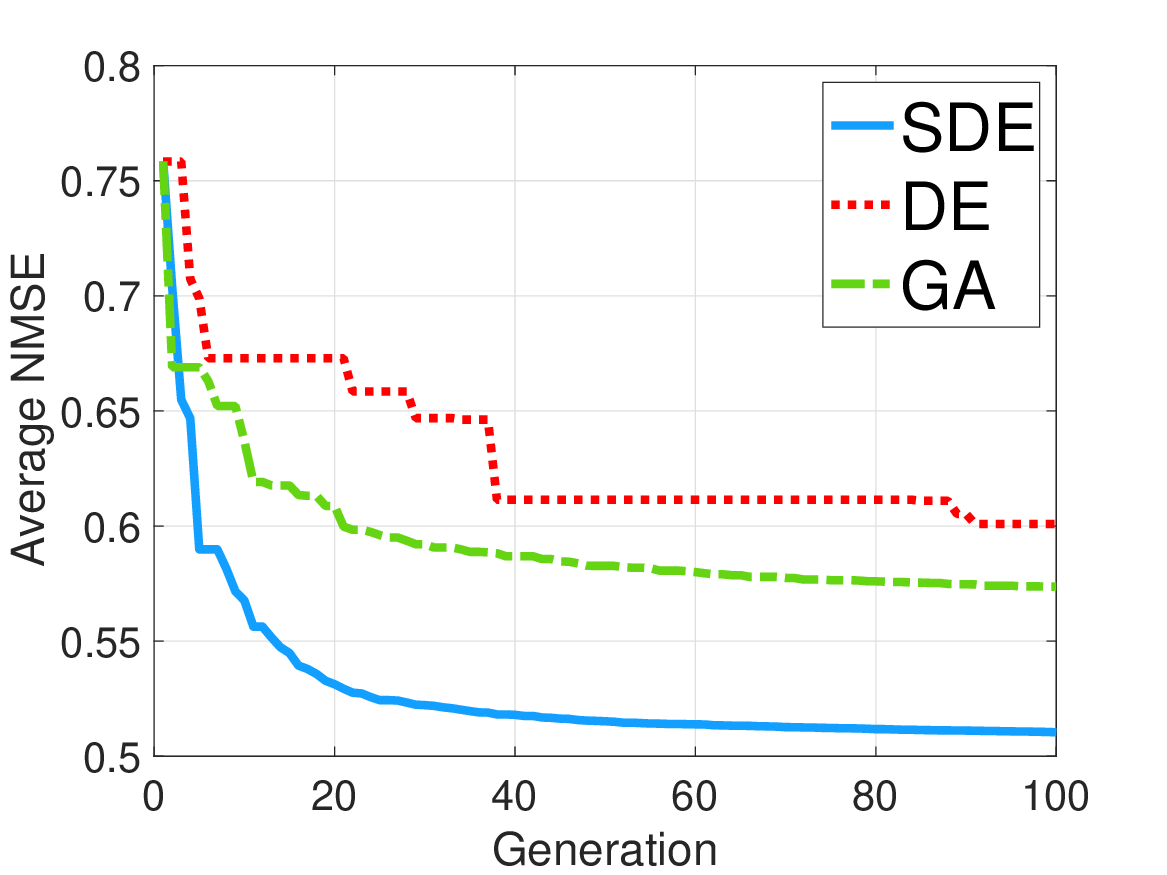}
    \caption{$\tau_p =5 , N=100$}\label{fig:100ris-5pilot}
\end{subfigure}
\hfill\begin{subfigure}[t]{0.24\textwidth}
    \includegraphics[scale=0.24]{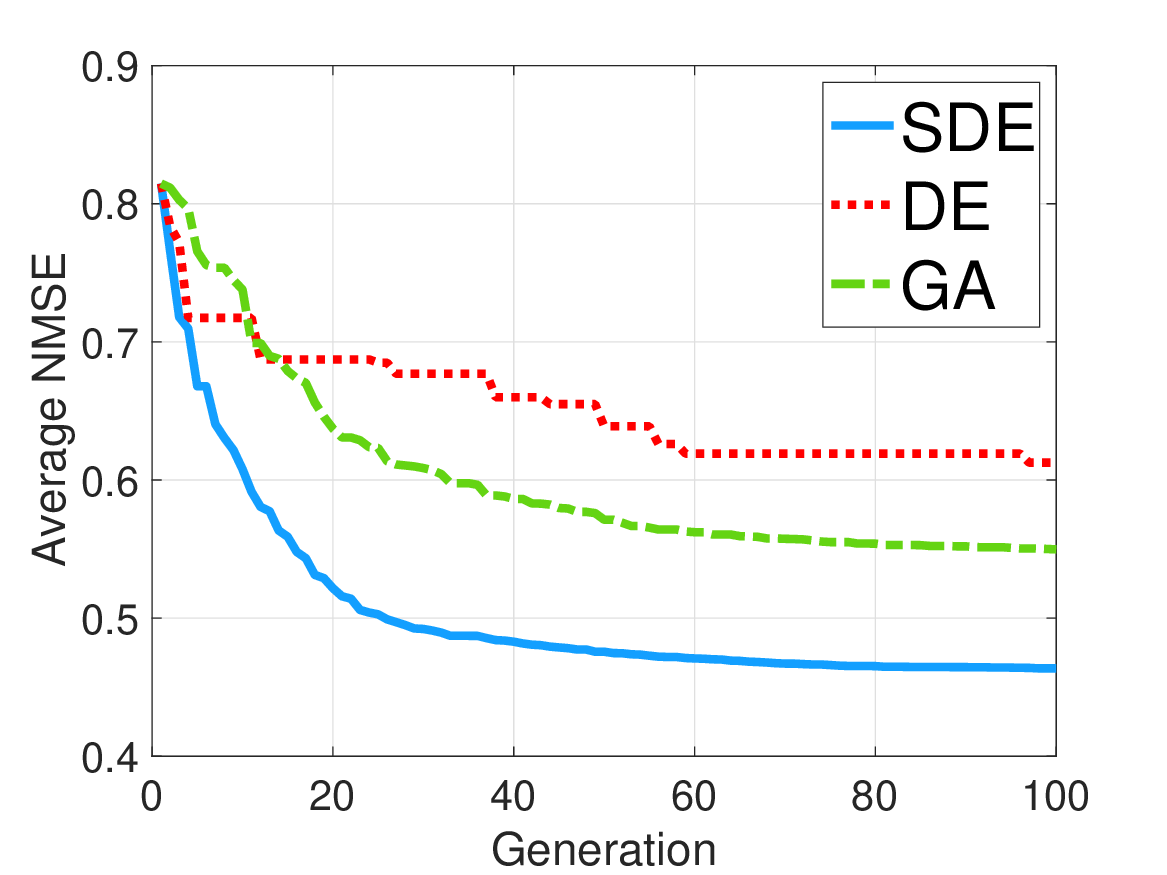}
    \caption{$\tau_p = 1, N = 256$}\label{fig:256ris-1pilot}
\end{subfigure}\hfill\begin{subfigure}[t]{0.24\textwidth}
    \includegraphics[scale=0.24]{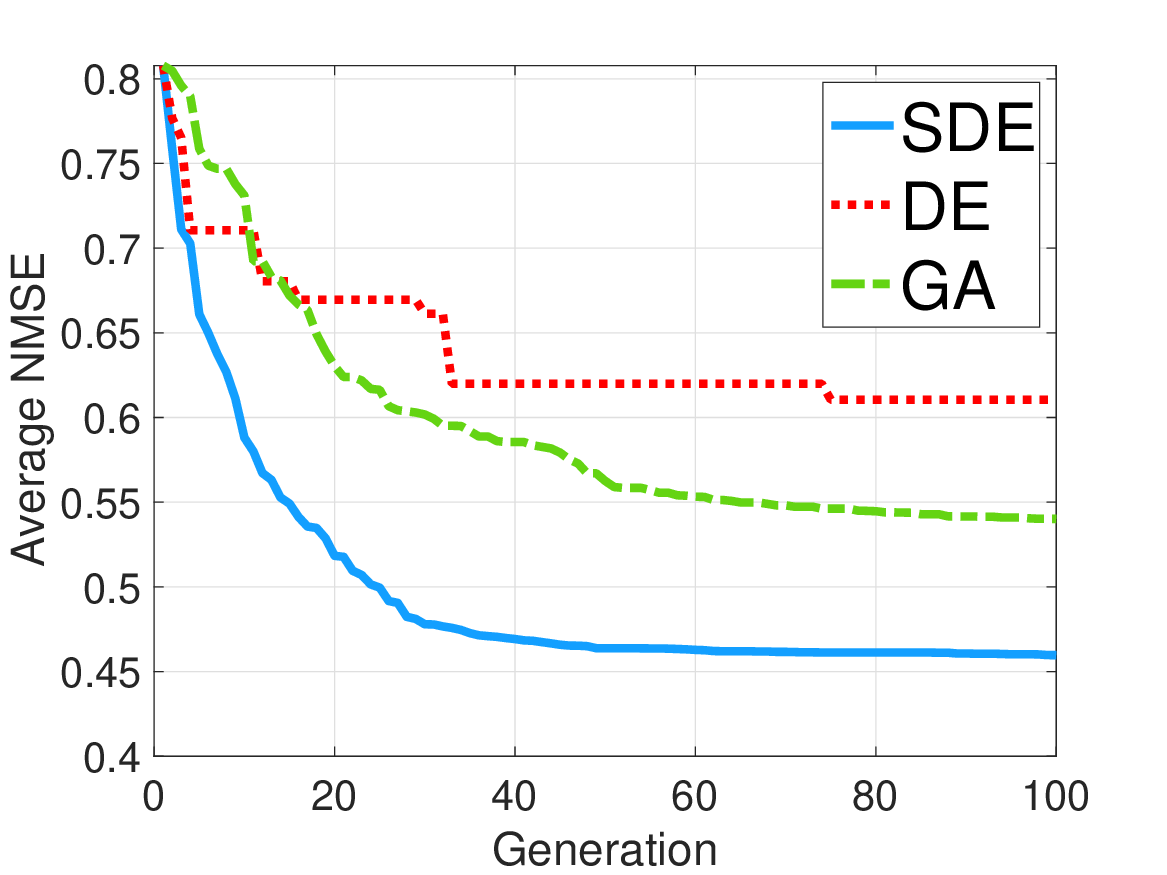}
    \caption{$\tau_p =5, N = 256$}\label{fig:256ris-5pilot}
\end{subfigure}

\caption{\label{fig:res}The convergence trend of SDE and other evolutionary algorithms, including canonical DE and GA.}
\vspace{-0.5cm}
\end{figure*}
{A system is considered that includes $40$ APs, each equipped with $16$ antennas and a single RIS, serving ten users in the square area of 1 $\text{km}^2$. These APs and users are uniformly distributed in $0.25 \times 0.25 \text{km}^2$ squares at the southwest and southeast corners, respectively, while the RIS is at the centre of the region considered. The height of the APs, RIS, and users is 15~[m], 30~[m], and 1.65~[m], respectively, while the spatial correlation matrices are defined in \cite{demir2022channel}. To simulate a harsh communication environment, it is assumed that the direct links between APs and users, i.e. $\mathbf{g}_{mk}, \forall \, m,k$, are unblocked with a probability of $\widetilde{p}$ \cite{van2021reconfigurable}. Specifically, the covariance matrix $\mathbf{G}_{mk}$ will be multiplied by a scalar $\alpha_{mk}$, where $\alpha_{mk} = 1$ with probability $\widetilde{p}$, and $\alpha_{mk} = 0$ otherwise. In our setup, $\widetilde{p} = 0.2$, meaning that, on average, each AP can directly transmit signal to two user.}
Our proposed phase shift design is compared both to the classic GA \cite{zhi2022power} and DE \cite{taha2022multi}.

In Fig.~\ref{fig:res}, we plot the convergence of all the evolutionary benchmarks considered. While both GA and DE could yield a good solution, SDE outperforms these baselines. Indeed, the results demonstrate the superior performance of our proposed algorithm, which incorporates an augmentation strategy, compared to the standard DE and GA. In the first evolutions, SDE  continuously explores the search space for finding a superior solution that DE and GA cannot achieve. Nonetheless, after many generations, all the selected individuals may correspond to a local minimum. Thus, the augmentation mechanism promotes efficiency by preventing premature convergence and continuing exploration to avoid these local minima. For completeness, two additional benchmarks, namely the equal phase shift design (EPS)  \cite{van2021reconfigurable} and random phase shift design (RPS) \cite{zhi2022power} are also harnessed for comparison. In the RPS design, each element is drawn from the uniform distribution in the interval $\vuong{-\pi,\pi}$. We evaluate the average NMSE for 1000 different realizations of the random phase shift elements, as reported in TABLE~\ref{table:res}.
\begin{table}[t]
\centering
\caption{Average NMSE with different benchmarks.}
\label{table:res}
\begin{tabular}{|c|ccccc|}
\hline
\multirow{2}{*}{\begin{tabular}[|c]{@{}c@{}}\, \\  Scenarios\end{tabular}} & \multicolumn{5}{c|}{Average NMSE}                                                                                                                                                                                                    \\ \cline{2-6} 
                          & \multicolumn{1}{c|}{SDE} & \multicolumn{1}{c|}{GA \cite{zhi2022power}}  & \multicolumn{1}{c|}{DE \cite{taha2022multi}}  & \multicolumn{1}{c|}{\begin{tabular}[c]{@{}c@{}}RPS \cite{zhi2022power}\end{tabular}} & \begin{tabular}[c]{@{}c@{}} EPS \cite{van2021reconfigurable}\end{tabular} \\ \hline
\multicolumn{1}{|c|}{\begin{tabular}[c]{@{}c@{}} $ \tau_p = 1$\\ $N = 100$ \end{tabular}}        & \multicolumn{1}{c|}{$\mathbf{\oke{0.5178}}$}  & \multicolumn{1}{c|}{\oke{$0.5805$}}  & \multicolumn{1}{c|}{\oke{$0.6077$}}  & \multicolumn{1}{c|}{\oke{$0.9096$}}                                                               & \oke{$0.9925$}                                                                          \\ \hline
\multicolumn{1}{|c|}{\begin{tabular}[c]{@{}c@{}} $ \tau_p = 5$\\ $N = 100$ \end{tabular}}        & \multicolumn{1}{c|}{\oke{$\mathbf{0.5104}$}}  & \multicolumn{1}{c|}{\oke{$0.5737$}}  & \multicolumn{1}{c|}{\oke{$0.6009$}}  & \multicolumn{1}{c|}{\oke{$0.8918$}}                                                               & \oke{$0.9856$}                                                                          \\ \hline
\multicolumn{1}{|c|}{\begin{tabular}[c]{@{}c@{}} $ \tau_p = 1$\\ $N = 256$ \end{tabular}}         & \multicolumn{1}{c|}{$\mathbf{\oke{0.4636}}$} & \multicolumn{1}{c|}{\oke{$0.5498$}} & \multicolumn{1}{c|}{\oke{$0.6126$}} & \multicolumn{1}{c|}{\oke{$0.9071$}}                                                              & \oke{$0.9940$}                                                                          \\ \hline
    \multicolumn{1}{|c|}{\begin{tabular}[c]{@{}c@{}} $ \tau_p = 5$\\ $N = 256$ \end{tabular}}         & \multicolumn{1}{c|}{$\mathbf{\oke{0.4597}}$} & \multicolumn{1}{c|}{\oke{$0.5401$}} & \multicolumn{1}{c|}{\oke{$0.6105$}} & \multicolumn{1}{c|}{\oke{$0.9003$}}                                                              & \oke{$0.9872$}                                                                         \\ \hline
\end{tabular}
\end{table}
{We now evaluate the ergodic spectral efficiency (SE) obtained with the phase shifts designed by  SDE and RPS with $N = 100$ and the pilot signal duration $\tau_p \in \{1,2,5\}$. Specifically, the uplink SE of user $k$ is formulated as 
$R_k = B(1- \tau_p/\tau_c)\log_2(1 + \gamma_{k}) \mbox{ [Mbps]}$, 
where $B$~[MHz] is the system bandwidth and the effective signal-to-interference-plus-noise ratio $\gamma_k$ is
\begin{align}
&\gamma_{k} = \dfrac{\abs{\mathsf{DS}_{uk}}^2}{\mathbb{E}\nhon{\abs{\mathsf{BU}_{uk}}^2} +  \sum_{k'=1, k' \ne k}^K\mathbb{E}\nhon{\abs{\mathsf{UI}_{uk'k}}^2} + \mathbb{E}\{\abs{\mathsf{NO}_{uk}}^2\} },
\end{align}
where $\mathsf{DS}_{uk} = \sqrt{p_u\eta_k}\mathbb{E}\nhon{\sum_{m=1}^L\mathbf{\hat{u}}_{mk}^H\mathbf{u}_{mk}}$, $\mathsf{BU}_{uk} = \sqrt{p_u\eta_k}\big(\sum_{m=1}^L\mathbf{\hat{u}}_{mk}^H\mathbf{u}_{mk} - \mathbb{E}\nhon{\sum_{m=1}^L\mathbf{\hat{u}}_{mk}^H\mathbf{u}_{mk}} \big)$, $\mathsf{UI}_{uk'k} = \sqrt{p_u\eta_{k'}}\sum_{m=1}^L\mathbf{\hat{u}}_{mk}^H\mathbf{u}_{mk'}$, and $\mathsf{NO}_{uk} = \sum_{m=1}^L \mathbf{\hat{u}}_{mk}^H\mathbf{w}_{um}$. The ergodic SE per user measured in Mbps are shown in TABLE~\ref{table:SE} for a system bandwidth $B= 10$~[MHz]. 
\begin{table}[t]
\centering
\caption{Ergodic SE per user with different benchmarks.}
\label{table:SE}
\begin{tabular}{|c|c|c|c|}
\hline
Scenarios & $\tau_p = 1$ & $\tau_p = 2$ & $\tau_p = 5$ \\ \hline
RPS       & 0.93  & 1.51  & 2.38  \\ \hline
SDE       & 1.25  & 1.93  & 2.88  \\ \hline
Gain      & 34\% & 28\% & 21\% \\ \hline
\end{tabular}
\end{table}
}

\section{Conclusions}
Improved channel estimation was conceived for RIS-assisted MIMO systems by optimizing the phase shift elements. The proposed algorithm has shown significant channel estimation error reduction. The solutions were found in polynomial time without relying on the gradient of the objective function and outperformed the state-of-the-art benchmarks found in the literature.
\vspace{-0.5cm}
\appendix
\subsection{Useful Corollary and Lemma} 
\begin{corollary}\label{lemmaMatrixRIS}
For a positive semi-definite matrix $\mathbf{D} \in \mathbb{C}^{N\times N}$ and $\widetilde{\mathbf{H}}_m = \mathbf{R}_{m,\mathrm{AP}}^{1/2}\mathbf{X}_m\mathbf{R}_{m,\mathrm{RIS}}^{1/2}$, it holds that 
      $ \mathbb{E}\{\widetilde{\mathbf{H}}_m\mathbf{D}\widetilde{\mathbf{H}}_m^H \} = \Tr\tron{\mathbf{R}_{m,\mathrm{RIS}}\mathbf{D}}\mathbf{R}_{m,\mathrm{AP}}.$
\begin{proof}
The left-hand side of \eqref{lemmaMatrixRIS} can be written as follows 
\begin{equation}
\begin{split}
&\mathbb{E}\{\widetilde{\mathbf{H}}_m\mathbf{D}\widetilde{\mathbf{H}}_m^H \} = \mathbb{E}\{\mathbf{R}_{m,\mathrm{AP}}^{1/2}\mathbf{X}_m\mathbf{R}_{m,\mathrm{RIS}}^{1/2}\mathbf{D}\mathbf{R}_{m,\mathrm{RIS}}^{1/2}\mathbf{X}_m{\mathbf{R}_{m,\mathrm{AP}}^{1/2}} \} \stackrel{(a)}{=} \\
& 
\mathbf{R}_{m,\mathrm{AP}}^{1/2} \Tr(\mathbf{R}_{m,\mathrm{RIS}}^{1/2}\mathbf{D}{\mathbf{R}_{m,\mathrm{RIS}}^{1/2}}) \mathbf{R}_{m,\mathrm{AP}}^{1/2} \stackrel{(b)}{=}  \Tr\tron{\mathbf{R}_{m,\mathrm{RIS}}\mathbf{D}}\mathbf{R}_{m,\mathrm{AP}},
\end{split}
\end{equation}
where $(a)$ is obtained by the independence of propagation channels between the AP and the RIS together with lemma \cite[Lemma 2]{le2023double}; $(b)$ is obtained by the property $\Tr\tron{\mathbf{XY}} = \Tr\tron{\mathbf{YX}}$ for size-matched matrices $\mathbf{X}$ and $\mathbf{Y}$. 
\end{proof} 
\end{corollary}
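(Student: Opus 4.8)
The plan is to reduce the matrix-valued expectation to a scalar trace by exploiting the deterministic nature of the correlation square roots and the iid Gaussian structure of $\mathbf{X}_m$. First I would substitute $\widetilde{\mathbf{H}}_m = \mathbf{R}_{m,\mathrm{AP}}^{1/2}\mathbf{X}_m\mathbf{R}_{m,\mathrm{RIS}}^{1/2}$ together with its Hermitian $\widetilde{\mathbf{H}}_m^H = \mathbf{R}_{m,\mathrm{RIS}}^{1/2}\mathbf{X}_m^H\mathbf{R}_{m,\mathrm{AP}}^{1/2}$, where the square roots may be taken Hermitian because the covariance matrices are positive semi-definite. The product then reads $\mathbf{R}_{m,\mathrm{AP}}^{1/2}\mathbf{X}_m\mathbf{M}\mathbf{X}_m^H\mathbf{R}_{m,\mathrm{AP}}^{1/2}$, where I set $\mathbf{M} = \mathbf{R}_{m,\mathrm{RIS}}^{1/2}\mathbf{D}\mathbf{R}_{m,\mathrm{RIS}}^{1/2}$. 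Since $\mathbf{R}_{m,\mathrm{AP}}^{1/2}$ is deterministic, it can be pulled outside the expectation, so that the whole task collapses to evaluating the single inner quantity $\mathbb{E}\{\mathbf{X}_m\mathbf{M}\mathbf{X}_m^H\}$.

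The crux is this inner expectation, and I would settle it entrywise. Writing $[\mathbf{X}_m\mathbf{M}\mathbf{X}_m^H]_{ij} = \sum_{a,b}[\mathbf{X}_m]_{ia}[\mathbf{M}]_{ab}\overline{[\mathbf{X}_m]_{jb}}$ and invoking the second-order moment of iid $\mathcal{CN}(0,1)$ entries, namely $\mathbb{E}\{[\mathbf{X}_m]_{ia}\overline{[\mathbf{X}_m]_{jb}}\} = \delta_{ij}\delta_{ab}$, the double sum collapses to $\delta_{ij}\sum_a [\mathbf{M}]_{aa}$. This yields $\mathbb{E}\{\mathbf{X}_m\mathbf{M}\mathbf{X}_m^H\} = \Tr(\mathbf{M})\mathbf{I}_M$, which is exactly the identity supplied by the cited \cite[Lemma~2]{le2023double}; alternatively I could invoke that lemma directly and dispense with the index bookkeeping.

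Finally I would reinsert the deterministic factors to obtain $\mathbf{R}_{m,\mathrm{AP}}^{1/2}\Tr(\mathbf{M})\mathbf{I}_M\mathbf{R}_{m,\mathrm{AP}}^{1/2} = \Tr(\mathbf{M})\mathbf{R}_{m,\mathrm{AP}}$, and then apply the cyclic property of the trace to simplify $\Tr(\mathbf{M}) = \Tr(\mathbf{R}_{m,\mathrm{RIS}}^{1/2}\mathbf{D}\mathbf{R}_{m,\mathrm{RIS}}^{1/2}) = \Tr\tron{\mathbf{R}_{m,\mathrm{RIS}}\mathbf{D}}$, which gives the claimed expression. I do not expect a deep obstacle here: the only two points demanding care are the Hermitian symmetry of the square roots and the use of the correct circularly symmetric Gaussian moment, i.e. the conjugate pairing $\mathbb{E}\{[\mathbf{X}_m]_{ia}\overline{[\mathbf{X}_m]_{jb}}\}$ rather than the pseudo-covariance $\mathbb{E}\{[\mathbf{X}_m]_{ia}[\mathbf{X}_m]_{jb}\}$, which vanishes for circularly symmetric $\mathbf{X}_m$ and must not be confused with it.
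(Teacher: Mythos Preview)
Your proposal is correct and follows essentially the same route as the paper: substitute the Kronecker-type factorization, pull the deterministic $\mathbf{R}_{m,\mathrm{AP}}^{1/2}$ outside the expectation, reduce the inner Gaussian quadratic form $\mathbb{E}\{\mathbf{X}_m\mathbf{M}\mathbf{X}_m^H\}$ to $\Tr(\mathbf{M})\mathbf{I}_M$ via \cite[Lemma~2]{le2023double}, and finish with the cyclic trace identity. The only difference is that you additionally spell out the entrywise moment computation underlying that lemma, which is a welcome clarification but not a distinct approach.
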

\begin{lemma}\label{secondmoment}
 The first moment of the aggregated channel $\mathbf{u}_{mk}$ is $\mathbb{E}\nhon{{\mathbf{u}}_{mk}} = \overline{\mathbf{H}}_m\mathbf{\Phi}\overline{\mathbf{z}}_k$. Additionally, the covariance matrix of the NLoS component $\widetilde{\mathbf{u}}_{mk}$ is
\begin{equation}
\mathbb{E}\{\tilde{\mathbf{u}}_{mk}\tilde{\mathbf{u}}_{mk}^H\}=\mathbf{\Delta}_{mk} +  \alpha_{mk}\mathbf{R}_{m,\mathrm{AP}}. 
\end{equation}
Furthermore, the second moment of  $\mathbf{u}_{mk}$ is computed as
    \begin{equation}
\mathbb{E}\nhon{\mathbf{u}_{mk}\mathbf{u}_{mk}^H} =\overline{\mathbf{H}}_m\mathbf{\Phi}\overline{\mathbf{z}}_k\overline{\mathbf{z}}_k^H\mathbf{\Phi}^H \overline{\mathbf{H}}_m^H + \mathbf{\Delta}_{mk} +  \alpha_{mk}\mathbf{R}_{m,\mathrm{AP}}.
\end{equation}
 The correlation between the NLoS components of the two aggregated channels $\mathbf{u}_{mk}$ and $\mathbf{u}_{mk'}$, $\forall k' \ne k$, is
    \begin{equation}
\mathbb{E}\{\widetilde{\mathbf{u}}_{mk}\widetilde{\mathbf{u}}_{mk'}^H\} = \Tr(\mathbf{R}_{m,\mathrm{RIS}}\mathbf{\Phi}\overline{\mathbf{z}}_{k}\overline{\mathbf{z}}_{k'}^H\mathbf{\Phi}^H)\mathbf{R}_{m,\mathrm{AP}}.
    \end{equation}
\end{lemma}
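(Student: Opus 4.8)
The plan is to decompose every random channel into its deterministic LoS mean and its zero-mean NLoS fluctuation, namely $\mathbf{H}_m = \overline{\mathbf{H}}_m + \widetilde{\mathbf{H}}_m$ with $\widetilde{\mathbf{H}}_m = \mathbf{R}_{m,\mathrm{AP}}^{1/2}\mathbf{X}_m\mathbf{R}_{m,\mathrm{RIS}}^{1/2}$, $\mathbf{z}_k = \overline{\mathbf{z}}_k + \widetilde{\mathbf{z}}_k$ with $\widetilde{\mathbf{z}}_k \sim \mathcal{CN}(\mathbf{0},\mathbf{R}_k)$, and $\mathbf{g}_{mk}\sim\mathcal{CN}(\mathbf{0},\mathbf{G}_{mk})$ already zero-mean. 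The first moment $\mathbb{E}\{\mathbf{u}_{mk}\} = \overline{\mathbf{H}}_m\mathbf{\Phi}\overline{\mathbf{z}}_k$ follows immediately: $\mathbb{E}\{\mathbf{g}_{mk}\} = \mathbf{0}$, and by the mutual independence of $\mathbf{X}_m$ and $\mathbf{z}_k$ together with $\mathbb{E}\{\mathbf{X}_m\} = \mathbf{0}$ we get $\mathbb{E}\{\mathbf{H}_m\mathbf{\Phi}\mathbf{z}_k\} = \overline{\mathbf{H}}_m\mathbf{\Phi}\overline{\mathbf{z}}_k$.

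Next I would substitute these decompositions into $\widetilde{\mathbf{u}}_{mk} = \mathbf{u}_{mk} - \overline{\mathbf{u}}_{mk}$ to obtain the four-term expansion
\begin{equation*}
\widetilde{\mathbf{u}}_{mk} = \mathbf{g}_{mk} + \overline{\mathbf{H}}_m\mathbf{\Phi}\widetilde{\mathbf{z}}_k + \widetilde{\mathbf{H}}_m\mathbf{\Phi}\overline{\mathbf{z}}_k + \widetilde{\mathbf{H}}_m\mathbf{\Phi}\widetilde{\mathbf{z}}_k.
\end{equation*}
Since $\mathbf{g}_{mk}$, $\widetilde{\mathbf{z}}_k$, and $\mathbf{X}_m$ are independent and zero-mean, every one of the six off-diagonal pairs in $\mathbb{E}\{\widetilde{\mathbf{u}}_{mk}\widetilde{\mathbf{u}}_{mk}^H\}$ vanishes—those mixing $\widetilde{\mathbf{H}}_m$ with a non-$\widetilde{\mathbf{H}}_m$ factor after conditioning on $\widetilde{\mathbf{H}}_m$, the rest directly. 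The four diagonal terms evaluate to $\mathbf{G}_{mk}$, to $\overline{\mathbf{H}}_m\mathbf{\Phi}\mathbf{R}_k\mathbf{\Phi}^H\overline{\mathbf{H}}_m^H$, and—via Corollary~\ref{lemmaMatrixRIS} with $\mathbf{D} = \mathbf{\Phi}\overline{\mathbf{z}}_k\overline{\mathbf{z}}_k^H\mathbf{\Phi}^H$—to $\alpha_{mk}\mathbf{R}_{m,\mathrm{AP}}$, while the doubly-fluctuating term $\mathbb{E}\{\widetilde{\mathbf{H}}_m\mathbf{\Phi}\widetilde{\mathbf{z}}_k\widetilde{\mathbf{z}}_k^H\mathbf{\Phi}^H\widetilde{\mathbf{H}}_m^H\}$ is handled by first conditioning on $\widetilde{\mathbf{z}}_k$, invoking Corollary~\ref{lemmaMatrixRIS}, and then averaging over $\widetilde{\mathbf{z}}_k$ to obtain $\Tr(\mathbf{R}_{m,\mathrm{RIS}}\mathbf{\Phi}\mathbf{R}_k\mathbf{\Phi}^H)\mathbf{R}_{m,\mathrm{AP}}$. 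Collecting the three terms that constitute $\mathbf{\Delta}_{mk}$ then gives $\mathbb{E}\{\widetilde{\mathbf{u}}_{mk}\widetilde{\mathbf{u}}_{mk}^H\} = \mathbf{\Delta}_{mk} + \alpha_{mk}\mathbf{R}_{m,\mathrm{AP}}$, and the second-moment identity follows at once from $\mathbb{E}\{\mathbf{u}_{mk}\mathbf{u}_{mk}^H\} = \overline{\mathbf{u}}_{mk}\overline{\mathbf{u}}_{mk}^H + \mathbb{E}\{\widetilde{\mathbf{u}}_{mk}\widetilde{\mathbf{u}}_{mk}^H\}$ because $\widetilde{\mathbf{u}}_{mk}$ is zero-mean.

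For the cross-correlation $\mathbb{E}\{\widetilde{\mathbf{u}}_{mk}\widetilde{\mathbf{u}}_{mk'}^H\}$ with $k'\ne k$, I would expand both factors in the same four-term form and inspect all sixteen products. Every product containing a $\mathbf{g}_{m\cdot}$ or a single isolated $\widetilde{\mathbf{z}}_\cdot$ vanishes, since $\mathbf{g}_{mk},\mathbf{g}_{mk'}$ are independent across users and $\widetilde{\mathbf{z}}_k,\widetilde{\mathbf{z}}_{k'}$ are independent and zero-mean; any product linear in a zero-mean $\widetilde{\mathbf{z}}_\cdot$ but quadratic in $\widetilde{\mathbf{H}}_m$ dies after conditioning on $\widetilde{\mathbf{H}}_m$; and the doubly-NLoS product vanishes because $\mathbb{E}\{\widetilde{\mathbf{z}}_k\widetilde{\mathbf{z}}_{k'}^H\} = \mathbf{0}$. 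The single surviving term is $\mathbb{E}\{\widetilde{\mathbf{H}}_m\mathbf{\Phi}\overline{\mathbf{z}}_k\overline{\mathbf{z}}_{k'}^H\mathbf{\Phi}^H\widetilde{\mathbf{H}}_m^H\}$, nonzero precisely because the two aggregated channels share the common AP--RIS fluctuation $\widetilde{\mathbf{H}}_m$, which Corollary~\ref{lemmaMatrixRIS} evaluates to $\Tr(\mathbf{R}_{m,\mathrm{RIS}}\mathbf{\Phi}\overline{\mathbf{z}}_k\overline{\mathbf{z}}_{k'}^H\mathbf{\Phi}^H)\mathbf{R}_{m,\mathrm{AP}}$, as claimed.

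The main obstacle is twofold. First, applying Corollary~\ref{lemmaMatrixRIS} to the cross term requires its bilinear identity to hold for $\mathbf{D} = \mathbf{\Phi}\overline{\mathbf{z}}_k\overline{\mathbf{z}}_{k'}^H\mathbf{\Phi}^H$, which for $k'\ne k$ is \emph{not} Hermitian, let alone positive semi-definite as the corollary is stated; I would note that the underlying identity of \cite[Lemma~2]{le2023double} holds for an arbitrary $\mathbf{D}$ (the positive semi-definiteness is inessential to the computation) and invoke it in that more general form. Second, the genuine effort is disciplined bookkeeping: one must verify that all six off-diagonal pairs vanish in the covariance and that fifteen of the sixteen products vanish in the cross-correlation, using the conditioning (tower-property) argument uniformly to eliminate any term carrying an odd number of zero-mean NLoS factors. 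No individual step is deep, but the vanishing claims must be argued cleanly so that exactly the stated terms remain.
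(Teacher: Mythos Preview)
Your proposal is correct and follows essentially the same route as the paper: decompose $\mathbf{u}_{mk}$ into its deterministic and zero-mean parts, expand $\widetilde{\mathbf{u}}_{mk}$ into the same four terms, eliminate the cross products by independence and zero-mean arguments, and evaluate the surviving quadratic-in-$\widetilde{\mathbf{H}}_m$ terms via Corollary~\ref{lemmaMatrixRIS}. Your observation that the cross-correlation step applies Corollary~\ref{lemmaMatrixRIS} with a non-Hermitian $\mathbf{D}=\mathbf{\Phi}\overline{\mathbf{z}}_k\overline{\mathbf{z}}_{k'}^H\mathbf{\Phi}^H$, and that the underlying identity from \cite[Lemma~2]{le2023double} in fact holds for arbitrary $\mathbf{D}$, is a valid sharpening that the paper glosses over.
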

\begin{proof}
The first moment of the aggregated channel $\mathbf{u}_{mk}$ is $\mathbb{E}\{\mathbf{u}_{mk}\} = \mathbb{E}\{\mathbf{g}_{mk} + (\overline{\mathbf{H}}_m + \widetilde{\mathbf{H}}_m)\mathbf{\Phi}\tron{\overline{\mathbf{z}}_{k} + \widetilde{\mathbf{z}}_{k}}\} = \mathbb{E}\{
        \mathbf{g}_{mk} + \overline{\mathbf{H}}_m\mathbf{\Phi} \overline{\mathbf{z}}_{k} + \widetilde{\mathbf{H}}_m\mathbf{\Phi}\overline{\mathbf{z}}_{k} + \overline{\mathbf{H}}_m\mathbf{\Phi}\widetilde{\mathbf{z}}_{k} + \widetilde{\mathbf{H}}_m\mathbf{\Phi}\widetilde{\mathbf{z}}_{k}\}=\overline{\mathbf{H}}_m\mathbf{\Phi} \overline{\mathbf{z}}_{k} $. Moreover,
the second moment of the NLoS component of $\mathbf{u}_{mk}$ is
\begin{multline}
 \mathbb{E}\{\tilde{\mathbf{u}}_{mk}\tilde{\mathbf{u}}_{mk}^H \}
= 
 \mathbf{G}_{mk} + \Tr(\mathbf{R}_{m,\mathrm{RIS}}\mathbf{\Phi}\overline{\mathbf{z}}_{k}\overline{\mathbf{z}}_{k}^H\mathbf{\Phi}^H)\mathbf{R}_{m,\mathrm{AP}} +\overline{\mathbf{H}}_m \times\\
 \mathbf{\Phi}
 \mathbf{R}_{k}\mathbf{\Phi}^H\overline{\mathbf{H}}_m^H   + \Tr(\mathbf{R}_{m,\mathrm{RIS}}\mathbf{\Phi}\mathbf{R}_{k}\mathbf{\Phi}^H)\mathbf{R}_{m,\mathrm{AP}} = \mathbf{\Delta}_{mk} + \alpha_{mk}\mathbf{R}_{m, \mathrm{AP}},
\end{multline}
by virtue of Corollary \ref{lemmaMatrixRIS}. The second moment of the channel $\mathbf{u}_{mk}$ is $\mathbb{E}\{\mathbf{u}_{mk}\mathbf{u}_{mk}^H \} = \mathbb{E}\{\tron{\overline{\mathbf{u}}_{mk} + \widetilde{\mathbf{u}}_{mk}}\tron{\overline{\mathbf{u}}_{mk} + \widetilde{\mathbf{u}}_{mk}}^H \}
    \stackrel{(a)}{=} \overline{\mathbf{u}}_{mk}\overline{\mathbf{u}}_{mk}^H + \mathbb{E}\nhon{\widetilde{\mathbf{u}}_{mk}\widetilde{\mathbf{u}}_{mk}^H} = \overline{\mathbf{H}}_m\mathbf{\Phi}\overline{\mathbf{z}}_k\overline{\mathbf{z}}_k^H\mathbf{\Phi}^H \overline{\mathbf{H}}_m^H + \mathbf{\Delta}_{mk} +  \alpha_{mk}\mathbf{R}_{m,\mathrm{AP}}$,
where $(a)$ is obtained because of the independence between the LoS and NLoS components of the channel $\mathbf{u}_{mk}$. Moreover, the correlation between the two NLoS parts of the aggregated channels $\mathbf{u}_{mk}$ and $\mathbf{u}_{mk'}$, $k' \ne k$, is as 
$\mathbb{E}\{\widetilde{\mathbf{u}}_{mk}\widetilde{\mathbf{u}}_{mk'}^H\} 
=\Tr(\mathbf{R}_{m,\mathrm{RIS}}\mathbf{\Phi}\overline{\mathbf{z}}_{k}\overline{\mathbf{z}}_{k'}^H\mathbf{\Phi}^H)\mathbf{R}_{m,\mathrm{AP}},
$
by virtue of Corollary \ref{lemmaMatrixRIS}.
    \end{proof}
   \vspace{-0.2cm}
\subsection{Proof of Lemma~\ref{Lemma:ChannelEst}}
We first define the term $\widetilde{\mathbf{y}}_{pmk}, \forall m,k,$ in \eqref{estimatedchannel} as
\begin{equation}
\begin{split}
&\widetilde{\mathbf{y}}_{pmk} = \mathbf{y}_{pmk} - \mathbb{E}\{\mathbf{y}_{pmk} \} =  \displaystyle\sqrt{p\tau_p}\sum\nolimits_{k' \in \mathcal{P}_k} \mathbf{u}_{mk'} + \mathbf{w}_{pmk} \\
& - \tron{\sqrt{p\tau_p}\sum\nolimits_{k' \in \mathcal{P}_k} \overline{\mathbf{u}}_{mk}} = \sqrt{p\tau_p}\sum\nolimits_{k' \in \mathcal{P}_k} \widetilde{\mathbf{u}}_{mk'} + \mathbf{w}_{pmk}.
\end{split}
\end{equation}
By exploiting the above expression of $\widetilde{\mathbf{y}}_{pmk}$, the expectation $\mathbf{\Gamma}_{mk} = \mathbb{E}\{\widetilde{\mathbf{u}}_{mk}\widetilde{\mathbf{y}}_{pmk}^H\}$ can be formulated as
\begin{equation}
\begin{split}
&\mathbf{\mathbf{\Gamma}}_{mk} =
\mathbb{E}\nhon{\widetilde{\mathbf{u}}_{mk}\tron{\sqrt{p\tau_p}\sum\nolimits_{k' \in \mathcal{P}_k} \widetilde{\mathbf{u}}_{mk'}^H + \mathbf{w}_{pmk}}^H} = \sqrt{p\tau_p} \times \\
& 
 \sum\nolimits_{k' \in \mathcal{P}_k \setminus \nhon{k}} \mathbb{E}\nhon{\widetilde{\mathbf{u}}_{mk}\widetilde{\mathbf{u}}_{mk'}^H} + \sqrt{p\tau_p} \mathbb{E}\nhon{\widetilde{\mathbf{u}}_{mk}\widetilde{\mathbf{u}}_{mk}^H}  \stackrel{(a)}{=} \sqrt{p\tau_p} \times \\
&   \sum\nolimits_{k' \in \mathcal{P}_k} \Tr\tron{\mathbf{R}_{m,\mathrm{RIS}}\mathbf{\Phi}\overline{\mathbf{z}}_{k}\overline{\mathbf{z}}_{k'}^H\mathbf{\Phi}^H}\mathbf{R}_{m,\mathrm{AP}} +\sqrt{p\tau_p}\mathbf{\Delta}_k \stackrel{(b)}{=} \sqrt{p\tau_p} \\
& \times  \Tr\tron{\mathbf{R}_{m,\mathrm{RIS}}\mathbf{\Phi}\overline{\mathbf{z}}_{k}\tron{\sum\nolimits_{k' \in \mathcal{P}_k} \overline{\mathbf{z}}_{k'}^H}\mathbf{\Phi}^H}\mathbf{R}_{m,\mathrm{AP}}+\sqrt{p\tau_p}\mathbf{\Delta}_{mk},
\end{split}
\end{equation}
where $(a)$ follows Lemma~\ref{secondmoment}; and $(b)$ is obtained by exploiting the linearity of the trace of matrices. Furthermore,
the covariance matrix $\mathbf{\Psi}_{mk} = \mathbb{E}\{\widetilde{\mathbf{y}}_{pmk}\widetilde{\mathbf{y}}_{pmk}^H\}$ can be formulated as
\begin{equation}
\begin{split}
&\mathbf{\mathbf{\Psi}}_{mk} =\\&
\mathbb{E}\nhon{\tron{\sqrt{p\tau_p}\sum_{k' \in \mathcal{P}_k} \widetilde{\mathbf{u}}_{mk'} + \mathbf{w}_{pmk}}\tron{\sqrt{p\tau_p}\sum_{k' \in \mathcal{P}_k} \widetilde{\mathbf{u}}_{mk'}^H + \mathbf{w}_{pmk}}^H} \\ 
& \stackrel{(a)}{=} \sum_{\substack{k',k'' \in \mathcal{P}_k \\ k' \ne k''}} \Tr\tron{\mathbf{R}_{m,\mathrm{RIS}}\mathbf{\Phi}\overline{\mathbf{z}}_{k'}\overline{\mathbf{z}}_{k''}^H\mathbf{\Phi}^H}\mathbf{R}_{m,\mathrm{AP}} + \sum_{k' \in \mathcal{P}_k} p\tau_p\mathbf{\Delta}_{mk} + p\tau_p  \\
& \times \sum_{k' \in \mathcal{P}_k} \Tr\tron{\mathbf{R}_{m,\mathrm{RIS}}\mathbf{\Phi}\overline{\mathbf{z}}_{k'}\overline{\mathbf{z}}_{k'}^H\mathbf{\Phi}^H}\mathbf{R}_{m,\mathrm{AP}} +  \mathbf{I}_M  = p\tau_p\sum_{k' \in \mathcal{P}_k} \mathbf{\Delta}_{mk'} \\
& + \mathbf{I}_M  + p\tau_p \Tr\tron{\mathbf{R}_{m,\mathrm{RIS}}\mathbf{\Phi}\tron{\sum_{k' \in \mathcal{P}_k} \overline{\mathbf{z}}_{k'}}\tron{\sum_{k' \in \mathcal{P}_k} \overline{\mathbf{z}}_{k'}^H}\mathbf{\Phi}^H}\mathbf{R}_{m,\mathrm{AP}} ,
\end{split}
\end{equation}
where $(a)$ follows by Lemma~\ref{secondmoment} and the channel estimate obtained as in the lemma.

\bibliographystyle{IEEEtran}
\bibliography{IEEEabrv,refs}
\end{document}